\newtheorem{definition}{Definition}
\newtheorem{proposition}{Proposition}
\newtheorem{example}{Example}
\newcommand{\FF}{\mathbb{F}}
\newcommand{\cv}{\mathbf{c}}
\newcommand{\xv}{\mathbf{x}}
\newcommand{\zv}{\mathbf{z}}
\begin{document}

\title{Sparsity Exploiting Erasure Coding for Resilient Storage and Efficient I/O Access in Delta based Versioning Systems}

\author{
\authorblockN{J. Harshan, Fr\'ed\'erique Oggier}
\authorblockA{School of Physical and Mathematical Sciences,\\
Nanyang Technological University, Singapore\\
Email:\{jharshan,frederique\}@ntu.edu.sg\\
}
\and
\authorblockN{Anwitaman Datta}
\authorblockA{School of Computer Engineering,\\
Nanyang Technological University, Singapore\\
Email:anwitaman@ntu.edu.sg\\
}
}

\maketitle

\begin{abstract}
In this paper we study the problem of storing reliably an archive of versioned data. Specifically, we focus on systems where the differences (deltas) between subsequent versions rather than the whole objects are stored - a typical model for storing versioned data. For reliability, we propose erasure encoding techniques that exploit the sparsity of information in the deltas while storing them reliably in a distributed back-end storage system, resulting in improved I/O read performance to retrieve the whole versioned archive. Along with the basic techniques, we propose a few optimization heuristics, and evaluate the techniques' efficacy analytically and with numerical simulations. 
\end{abstract}

\begin{keywords}
Datacenter Networking, Version management, Fault tolerance, Erasure coding
\end{keywords}

%
%

\section{Introduction}

Using deltas is a well known technique to store a sequence of versions of a data object, where the differences between consecutive versions, rather than complete object instances themselves are maintained. It is used for a variety of applications, e.g.: (1) consider a user working on a local copy of his data, who explicitly saves/commits versions using a tool like the popular version management system, Subversion, a.k.a. SVN \cite{svn}. Then SVN is keeping the differences (`deltas') across consecutive versions, instead of all versions. 
(2) A very different kind of application is Wikipedia, which likewise keeps track of the differences between article contents, so that it is easy to track/revert changes, or identify vandalism. (3) Deltas are also exploited by cloud based back-up services, to reduce the network usage when uploading/downloading (synching) data, to give users old file versions from previous back-ups. 

As illustrated by the previous examples, the notion of differences (deltas) is particularly suited to the storage of multiple versions of the same data objects.

In this paper, we are interested in looking at the back-end storage systems to store the versioned data reliably. This reliability in the back-end system is derived by applying two mechanisms, (i) distribution - i.e., deployment of multiple storage devices, so that even if some of the devices fail, there are other storage devices which can still serve the data, (ii) storing the data redundantly over this distributed storage network, so that despite the loss of individual storage devices, enough information is retained in the system such that the original data is preserved. The redundancy can be obtained by replicating the data, or by employing erasure coding techniques, which are known (see e.g. \cite{google,hadoop}) to achieve better fault tolerance vis-a-vis storage overhead. This in turn has also led to a renewed interest in designing new erasure codes, aimed to address peculiarities of distributed storage systems \cite{OgD2}. Existing works are however predominantly geared towards storing immutable content, unlike the case of versioned data.
The recent works which do focus on mutable content do so in the context of efficiently carrying out an update \cite{RVBS,HPZV,MWC,ECD}, and thus focuses only on the storage of the latest version of the data. 

In contrast to these existing works, we address the question of efficient storage of versioned data and design a novel erasure coding framework - Sparsity Exploiting Coding (SEC) - where the version differences are erasure encoded, instead of encoding each version individually, and the sparsity of information across versions is opportunistically exploited to optimize the system's (disk) I/O performance during retrieval of the versioned archive. 
We evaluate the efficacies of the presented framework using static resiliency analysis, and do so by studying both systematic and non-systematic maximum distance separable (MDS)  codes (see Section \ref{sec:model} for a definition of MDS and systematic), and for different redundancy placement strategies. 
Our analysis demonstrates that the number of I/O accesses required is significantly reduced when retrieving the multiple versioned data archive. Due to a lack of consensus on proper workloads to study such systems (see e.g. \cite{dedup}), we experimented with a few example scenarios, where I/O reduction of up to 20\% were observed to retrieve a data object with 5 versions, while reductions between 4-13\% were observed for even just two versions, in randomized experiments where probability distributions on the sparsity of differences were chosen to study scenarios ranging from unfavorable to favorable to the proposed framework.

%
%

\section{System Model \& Preliminaries} 
\label{sec:model}

We start by providing a formal framework that describes erasure coding. 
Let $\FF_q$ denote the finite field with $q$ elements, where $q$ is a prime power, typically a power of 2 here. We will denote by $\xv \in \FF_q^k$ a data object to be stored over a storage network, that is, the data object is seen as a vector of $k$ blocks taking value in the alphabet $\FF_q$. We assume a fixed sized data object, and in particular that the modifications of this object do not change its length, which does not readily translate to application level objects such as files or directories. Thus, we implicitly assume that the application level objects are split and transformed into fixed sized objects (arguably with necessary zero padding), which is then used as the input $\xv \in \FF_q^k$ for the encoding process. The nuances of this transformation, as well as the subsequent reassembly of the whole files to be used by the applications is beyond the scope of this work, and all our subsequent discussions will instead be centered around the abstract data objects represented by $\xv \in \FF_q^k$. 

We consider the scenario of an erasure coding based distributed storage system, where fault tolerance is achieved using linear erasure codes. Recall that to archive an object $\xv \in \FF_q^k$, we first encode it using an $(n,k)$ linear code, that is $\xv$ is mapped to the codeword
\begin{equation}\label{eq:lincode}
\cv = \mathbf{G}\xv \in \FF_q^n,~n>k,
\end{equation}
for $\mathbf{G}$ an $n\times k$ matrix with coefficients in $\FF_q$ called {\em generator matrix} or sometimes coding matrix. The ratio $k/n$ is called the {\em rate} of the code.
We use the term {\em systematic} to refer to a codeword $\cv$ whose $k$ first components are $\xv$, that is $c_i=x_i$, $i=1,\ldots,k$. Once a codeword of length $n$ is obtained, all the $n$ coefficients $c_i$, $i=1,\ldots,n$ are stored across $n$ distinct nodes of the network. We say that an $(n,k)$ linear code is MDS (which stands for Maximum Distance Separable) when any patterns of $n-k$ failures can be tolerated.

Let $\xv_1 \in \FF_q^k$ be the first version of a data object to be stored. The data owner 
may at any time decide to modify it, giving rise to a new version of this data object, 
denoted by $\xv_2 \in \FF_q^k$. More generally, a new version $\mathbf{x}_{j+1}$ is obtained from $\xv_j$, and over time, we obtain a sequence $\{ \mathbf{x}_{j} \in \mathbb{F}^{k}_{q},~j=1,2,\ldots, L <\infty\}$ of different versions of a data object, to be stored in the network. The bit level-wise modifications between two successive versions are modelled by
\begin{equation}
\label{update_eq}
\mathbf{x}_{j + 1} = \mathbf{x}_{j} + \mathbf{z}_{j + 1},
\end{equation}
where $\mathbf{z}_{j + 1} \in \mathbb{F}^{k}_{q}$ keeps track of the changes in the $j$-th update. 

From a user point of view, the difference between two consecutive versions is determined by the application semantics. From a back-end storage system view however, we are interested in sequences $\{ \mathbf{x}_{j} \in \mathbb{F}^{k}_{q},~j=1,2,\ldots, L <\infty\}$ of data objects in their bit level representation, and exploit opportunistically the fact that often $\xv_{j+1}$ and $\xv_j$ may have little differences at the bit level or said differently $\zv_{j+1}$ in (\ref{update_eq}) is sparse (formally defined in Definition \ref{def:gamma}). As motivated in the introduction, version management systems like SVN \cite{svn} store differences (deltas) across versions, and are natural candidates to benefit from the proposed coding strategy.

\begin{definition}\label{def:gamma}
For some integer $1 \leq \gamma < k$, a vector $\mathbf{z} \in \mathbb{F}^{k}_{q}$ is said to be $\gamma$-sparse if it contains at most $\gamma$ non-zero entries. 
\end{definition} 

Once $\zv_{j+1}\in \mathbb{F}^{k}_{q}$ is $\gamma$-sparse, it suggests that it should be possible to access it more efficiently (with less I/O reads) than a normal data object. 
Indeed, the ideal case would be if one could only use $\gamma$ I/O reads, since the other $k-\gamma$ positions contain zeroes, without having to look for the non-zero positions.
We will show in next section, by proposing an explicit coding strategy, that it is possible to reduce the number of I/O reads from $k$ to $2\gamma$ I/O reads, which thus becomes beneficial when $\gamma < \frac{k}{2}$. The ideal case of $\gamma$ I/O reads is not achieved, since in practice we do not know the positions of the zeroes.   

%
%
\section{Sparsity Exploiting Coding (SEC)}
\label{sec2}

Let $\{\mathbf{x}_{j} \in \mathbb{F}^{k}_{q},~1\leq j \leq L\}$ be the sequence of versions of a data object to be stored in the network, where $\xv_j$ is the $j$th version (or version at the $j$-th instant of time). The number of components modified from $\mathbf{x}_{j}$ to $\mathbf{x}_{j + 1}$ is reflected in the vector $\mathbf{z}_{j + 1}=\xv_{j+1}-\xv_j$ in \eqref{update_eq} which is then $\gamma_{j+1}$-sparse (see Definition \ref{def:gamma}) for some $1 \leq \gamma_{j+1} \leq k$. 
We propose an encoding strategy using an $(n,k)$ linear erasure code (see (\ref{eq:lincode})) which exploits the sparsity of the differences across updates, thus referred to as {\em sparsity exploiting coding (SEC)}.
Note that the value $\gamma_{j+1}$ may a priori vary across updates of the same object and across different objects, and that sparsity is exploitable only when $\gamma_{j+1} < \tfrac{k}{2}$. 


\begin{figure}
\begin{algorithmic}[1]
\Procedure{Encode}{$\mathcal{X}, \mathbf{G}$}
\State \textbf{FOR} $0 \leq j \leq L-1$
\State $~~~$ \textbf{IF} $j = 0$
\State $~~~$ $~~~$ return $\mathbf{c}_{1} = \mathbf{G}\mathbf{x}_{1}$;
\State $~~~$ \textbf{ELSE} (\emph{This part summarizes Step $j+1$ in text})
\State $~~~$ $~~~$ Compute $\mathbf{z}_{j+1} = \mathbf{x}_{j+1} - \mathbf{x}_{j}$;
\State $~~~$ $~~~$ Compute and Store $\gamma_{j+1}$;
\State $~~~$ $~~~$ return $\mathbf{c}_{j+1} = \mathbf{G}\mathbf{z}_{j+1}$;
\State $~~~$\textbf{END IF}
\State \textbf{END FOR}
\EndProcedure
\end{algorithmic}
\caption{Encoding Procedure for SEC}\label{Algorithm1}
\end{figure}


\subsection{Object Encoding}
\label{sec2_subsec1}

The {\bf basic SEC method} to encode the $j$th version $\mathbf{x}_{j+1}$, $j\leq L-1$, using deltas is formally given by:

{\bf Step $j + 1$.} To encode the $(j+1)$-th version, the difference vector $$\mathbf{z}_{j+1} = \mathbf{x}_{j+1} - \mathbf{x}_{j}$$ and the corresponding sparsity level $\gamma_{j+1}$ are computed. Then the object $\mathbf{z}_{j+1}$ is encoded as either
$$\mathbf{c}_{j+1} = \mathbf{G}_S\mathbf{z}_{j+1},$$
if the coding matrix $\mathbf{G}_S \in \mathbf{F}^{n \times k}_{2}$ is in systematic form, or $$\mathbf{c}_{j+1} = \mathbf{G}_N\mathbf{z}_{j+1},$$
if $\mathbf{G}_N \in \mathbf{F}^{n \times k}_{2}$ is not in systematic form. 

The sparsity exploiting coding (SEC) procedure is summarized algorithmically in Figure \ref{Algorithm1}.
The input and the output of the algorithm are $\mathcal{X} = \{\mathbf{x}_{j + 1} \in \mathbb{F}^{k}_{q},~0\leq j \leq L-1\}$ and $\{\mathbf{c}_{j+1},~0\leq j \leq L-1\}$, respectively. 

The above description emphasizes the differential nature of the proposed SEC, where the first version is encoded in full while the subsequent versions are encoded via their subsequent differences. This leads to a recursive encoding of the object $\mathbf{x}_{l}$ for $l > 1$, whose overall storage pattern is $\{\mathbf{x}_{1}, \mathbf{z}_{2}, \ldots, \mathbf{z}_{L}\}$.

There are two main missing ingredients to complete the description of the proposed SEC: 
(1) explicit constructions for the coding matrices ${\bf G}_S$ and ${\bf G}_N$ that facilitate the recovery of $\mathbf{z}_{j+1}$ with fewer than $k$ I/O reads when $\gamma_{j+1} < \frac{k}{2}$, and (2) how the data placement of the objects $\{\mathbf{x}_{1}, \mathbf{z}_{2}, \ldots, \mathbf{z}_{L}\}$ should be done across the sets of nodes $\{\mathcal{N}_{1}, \mathcal{N}_{2}, \ldots, \mathcal{N}_{L}\}$, where the set $\mathcal{N}_{j+1}$ of nodes is used to store the components of $\mathbf{c}_{j+1}$.  
Both issues are equally important, and thus deserve a (sub)section of their own (see Subsection \ref{subsec:retrDEC} for the code design, and Section \ref{sec3} for the data allocation).

We conclude this subsection with some remarks, including two possible variations of the above SEC. These variations are not mutually exclusive and can be used in conjunction. 

{\bf Optimized Step $j + 1$.} A first variant of {\bf Step $j + 1$} is obtained by encoding a whole object if the sparsity level is too high, namely:
Store $\mathbf{c}_{j+1} = \mathbf{G}_S\mathbf{z}_{j+1}$ (or $\mathbf{c}_{j+1} = \mathbf{G}_N\mathbf{z}_{j+1}$) only when $\gamma_{j+1} < \frac{k}{2}$, and store $\mathbf{c}_{j+1} = \mathbf{G}_S\mathbf{x}_{j+1}$ (or $\mathbf{c}_{j+1} = \mathbf{G}_N\mathbf{z}_{j+1}$), otherwise. 

The I/O advantages of the {\bf Optimized Step $j+1$} will be discussed with an example in Subsection \ref{sec2:subsec:example}.

{\bf Reversed SEC.}\label{rem1}
For applications where the latest archived versions of the object are frequently accessed, a variant of the proposed SEC method could be employed where the order of storing the difference vectors is reversed as $\{\mathbf{z}_{2}, \mathbf{z}_{3}, \ldots, \mathbf{z}_{L}, \mathbf{x}_{L}\}$, so as to favor the latest version access.

Finally, note that the SEC stores only the deltas, yet there is an implicit assumption that $\xv_j$ is known, in order to compute its difference with $\xv_{j+1}$, $j=1,\ldots,L-1$. A practical way to satisfy this requirement is to cache a full copy of the latest version $\xv_j$, until a new version $\xv_{j+1}$ arrives. Keeping a cache of the latest version also helps in improving the response time and overheads of data read operations in general. Alternatively, the second variation above, Reversed SEC, can be applied, where the latest version is encoded fully, along with differences of older versions.

\subsection{Object Retrieval with Non-Systematic SEC}
\label{subsec:retrDEC}

Suppose that the $L$ versions of a data object have been archived, and the user needs to retrieve $\mathbf{x}_{l}$ for some $1 < l \leq L$. 
We discuss the procedure to retrieve $\mathbf{x}_{1}, \mathbf{z}_{2}, \ldots, \mathbf{z}_{l}$ from $\mathcal{N}_{1}, \mathcal{N}_{2}, \ldots, \mathcal{N}_{l}$. The recovery procedure depends on the structure of the SEC generator matrix, and hence, we explain the procedure considering two separate cases where the coding matrix is (i) $\mathbf{G}_{N}$ (non-systematic) and (ii) $\mathbf{G}_{S}$ (systematic). 
We start with the non-systematic case.

To retrieve $\mathbf{x}_{1}$, choose a subset of $k$ nodes from $\mathcal{N}_{1}$ to obtain 
\begin{equation*}
\mathbf{y} = \mathbf{G}_{\small{sub}}\mathbf{x}_{1},
\end{equation*}
where $\mathbf{G}_{\small{sub}} \in \mathbb{F}_{q}^{k \times k}$ is a submatrix of $\mathbf{G}_{N}$ which is invertible, then recover $\mathbf{x}_{1}$ as
\begin{equation*}
\mathbf{x}_{1} = \mathbf{G}^{-1}_{\small{sub}}\mathbf{y}.
\end{equation*}
We need to make sure that such a submatrix always exists, which gives us a first design criterion:
\begin{itemize}
\item {\bf Criterion 1.} There is at least one $k \times k$ submatrix of $\mathbf{G}_{N}$ that is full rank (to retrieve $\{\mathbf{x}_{1}, \mathbf{z}_{j} ~|~ \gamma_{j} \geq \frac{k}{2} \}$).
\end{itemize}

The retrieval procedure for $\{ \mathbf{z}_{j}, 2 \leq j \leq l \}$ depends on the corresponding sparsity levels $\{ \gamma_{j}, 2 \leq j \leq l \}$. If $\gamma_{j} \geq \frac{k}{2}$, then $\mathbf{z}_{j}$ is recovered using the same procedure as that of $\mathbf{x}_{1}$. If $\gamma_{j} < \frac{k}{2}$, choose a subset of $2\gamma_j$ nodes from $\mathcal{N}_{j}$ to obtain 
\begin{equation*}
\mathbf{y} = \mathbf{G}_{\gamma_{j}}\mathbf{z}_{j},
\end{equation*}
where $\mathbf{G}_{\gamma_{j}} \in \mathbb{F}^{2\gamma \times k}$ is a submatrix of $\mathbf{G}$.  
This gives us our second code design criterion, which follows from \cite{ZP}:
\begin{proposition}
\label{prop1}
If any $2\gamma$ columns of the $2\gamma \times k$ matrix $\Phi$ are linearly independent, then it is possible to uniquely recover the $\gamma$-sparse vector $\mathbf{z}$ from $\Phi\mathbf{z}$.
\end{proposition}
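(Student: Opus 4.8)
The plan is to reduce ``unique recovery'' to the injectivity of the map $\mathbf{z}\mapsto\Phi\mathbf{z}$ restricted to the set of $\gamma$-sparse vectors. Once injectivity is established, uniqueness is immediate: any two $\gamma$-sparse preimages of a given syndrome $\mathbf{y}=\Phi\mathbf{z}$ must coincide, so $\mathbf{z}$ is determined by $\mathbf{y}$ (in principle one could even exhibit it by testing each of the $\binom{k}{\gamma}$ candidate supports and solving the resulting overdetermined linear system, though here we only claim recoverability, not an efficient decoder).

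First I would take two $\gamma$-sparse vectors $\mathbf{z}_1,\mathbf{z}_2\in\FF_q^k$ with $\Phi\mathbf{z}_1=\Phi\mathbf{z}_2$, and set $\mathbf{w}=\mathbf{z}_1-\mathbf{z}_2$, so that $\Phi\mathbf{w}=\mathbf{0}$. The support of $\mathbf{w}$ is contained in the union of the supports of $\mathbf{z}_1$ and $\mathbf{z}_2$, hence $\mathbf{w}$ has at most $2\gamma$ nonzero entries. The relation $\Phi\mathbf{w}=\mathbf{0}$ then says precisely that the columns of $\Phi$ indexed by $\mathrm{supp}(\mathbf{w})$ satisfy a linear dependence whose coefficients are the corresponding entries of $\mathbf{w}$.

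Next I would invoke the hypothesis. Since $\gamma<k/2$ we have $2\gamma\le k$, so speaking of $2\gamma$ columns makes sense, and the assumption that \emph{any} $2\gamma$ columns of $\Phi$ are linearly independent implies that any set of at most $2\gamma$ columns is linearly independent (pad a smaller index set up to size $2\gamma$ by adjoining arbitrary further column indices). Applying this to the at most $2\gamma$ columns indexed by $\mathrm{supp}(\mathbf{w})$ forces all the coefficients to vanish, i.e. $\mathbf{w}=\mathbf{0}$ and $\mathbf{z}_1=\mathbf{z}_2$. This establishes injectivity on $\gamma$-sparse vectors and hence unique recoverability.

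The argument is short, and the only mild subtlety — which I would flag as the ``obstacle'' — is the padding step: one must observe that linear independence of \emph{every} $2\gamma$-subset of columns is equivalent to the spark of $\Phi$ (the smallest number of linearly dependent columns) exceeding $2\gamma$, and that this, rather than the literal statement, is what drives the proof; over $\FF_q$ this is a pure column-rank statement requiring no field-specific input. I would also remark that the bound $2\gamma$ is sharp: if some $2\gamma$ columns were dependent, a witnessing null vector could be split into two $\gamma$-sparse halves with the same image, defeating unique recovery — which is exactly why the code design asks $\mathbf{G}_{\gamma_j}$ to have this ``any $2\gamma$ columns independent'' property.
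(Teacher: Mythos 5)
Your proof is correct. It differs from the paper's argument in route rather than in substance: the paper reduces the claim to standard coding theory, viewing $\Phi$ as the parity-check matrix of a $(k,k-2\gamma)$ code $\mathcal{C}$, noting that independence of every $2\gamma$ columns forces the minimum distance of $\mathcal{C}$ to be at least $2\gamma+1$, and then invoking the fact that such a code corrects all error patterns of weight up to $\gamma$ (i.e.\ syndromes of $\gamma$-sparse vectors are distinct). You instead prove injectivity of $\mathbf{z}\mapsto\Phi\mathbf{z}$ on $\gamma$-sparse vectors directly: the difference of two $\gamma$-sparse preimages is a vector of weight at most $2\gamma$ in the null space of $\Phi$, and the column-independence hypothesis (after the padding observation, which you correctly flag and which needs $2\gamma\le k$, guaranteed here since $\gamma<k/2$) kills it. In effect you have unfolded the proofs of the two coding-theoretic facts the paper cites, so your argument is more elementary and self-contained, and your sharpness remark (splitting a dependent $2\gamma$-column relation into two distinct $\gamma$-sparse vectors with equal image) is a valid converse the paper does not state; what the paper's phrasing buys in exchange is the explicit identification of recovery with syndrome decoding of $\mathcal{C}$, which connects the criterion to known decoders and to the MDS/Cauchy constructions used later. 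Both establish recoverability without an efficient algorithm, as you note.
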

\begin{proof}
Since $2\gamma < k$, we can view $\Phi$ as the parity check matrix of a $(k, k - 2\gamma)$ linear code $\mathcal{C}$ in $\mathbb{F}^{k}_{q}$. For the matrix $\Phi$, if any $2\gamma$ columns of $\Phi$ are linearly independent, then the minimum Hamming distance of $\mathcal{C}$ is at least $2\gamma + 1$. Thus, from the properties of a linear code, $\mathcal{C}$ can correct all error patterns of weight less than or equal to $\gamma$, which in turn implies that it is possible to uniquely recover a $\gamma$-sparse vector $\mathbf{x}$.
\end{proof}

\begin{itemize}
\item {\bf Criterion 2.} For every $\gamma_{j} < \frac{k}{2}$, there is at least one $2\gamma_{j} \times k$ submatrix of $\mathbf{G}_{N}$ for which any $2\gamma_{j}$ columns are linearly independent (to retrieve $\{\mathbf{z}_{j} ~|~ \gamma_{j} < \frac{k}{2} \}$.)
\end{itemize}

It is clear that a minimum of $k$ I/O reads are needed to retrieve $\mathbf{x}_{1}$. However, to recover $\mathbf{z}_{j}$ for $2 \leq j \leq l$, the number of I/O reads is $\mbox{min} (2 \gamma_{j}, k)$. Overall, the total number of I/O reads to retrieve $\mathbf{x}_{l}$ in the differential set up is 
\begin{equation}
\label{reads_dec}
\eta(\mathbf{x}_{l}) = k + \sum_{j = l' + 1}^{l} \mbox{min} (2 \gamma_{j}, k),
\end{equation}
where $l' = 1$ for the basic encoding method ({\bf Step $j+1$}, $j\leq L-1$). For the optimized method ({\bf Optimized Step $j+1$}, $j\leq L-1$), $l' \leq l$ corresponds to the most recent version such that $\gamma_{l'} \geq \frac{k}{2}$. Finally, since the decoding method is differential, the procedure to read the first $l$ versions is the same as that for reading $\mathbf{x}_{l}$ for both the basic and the optimized method. Hence, the total number of I/O reads to retrieve the first $l$ versions is 
\begin{equation}
\label{eq:readall}
\eta(\mathbf{x}_{1}, \mathbf{x}_{2}, \ldots, \mathbf{x}_{l}) = k + \sum_{j = 2}^{l} \mbox{min} (2 \gamma_{j}, k).
\end{equation}

When there are node failures, different contenders for $\mathbf{G}_{sub}$ and $\mathbf{G}_{\gamma_{j}}$ give us the option to retrieve the objects before the node repair process. Hence, it is beneficial for the overall system performance to relax the condition of \emph{at least one submatrix} to \emph{several submatrices} in the criteria 1 and 2.


\begin{example}
\label{example_non_sys}
Consider an $(n, k)$ maximum distance separable (MDS) code whose generator matrix $\mathbf{G}_{N}$ is given by the Cauchy matrix
\begin{equation}
\label{cauchy_matrix}
\mathbf{G}_{N} = \left[\begin{array}{rrrrrrr}
g_{1,1} & g_{1,2} & \ldots & g_{1,k}\\
g_{2,1} & g_{2,2} & \ldots & g_{2,k}\\
\vdots & \vdots & \vdots & \vdots\\
g_{n,1} & g_{n,2} & \ldots & g_{n,k}\\
\end{array}\right],
\end{equation}
where $g_{i,j} = \frac{1}{h_{i} - f_{j}}$ for $\{ h_{i} \in \mathbb{F}_{q}, 1 \leq i \leq n \}$ and $\{ f_{j} \in \mathbb{F}_{q}, 1 \leq j \leq k\}$ such that $h_{i} - f_{j} \neq 0$ $\forall ~i, j$. Since any square submatrix of a Cauchy matrix is full rank over a finite field \cite{LaF}, any $2\gamma_{j} \times k$ submatrix  of $\mathbf{G}_{N}$ satisfies Proposition \ref{prop1}. Thus, MDS codes from Cauchy matrices are readily applicable in the proposed differential set up.
\end{example}

\subsection{Object Retrieval with Systematic SEC}
\label{sec3:subsec:sys_retr}

We next consider the case where the $L$ versions of a data object are archived using a systematic code, and the user needs to retrieve $\mathbf{x}_{l}$ for some $1 < l \leq L$.
The generator matrix of the systematic code is of the form
\begin{equation*}
\mathbf{G}_{S} =  \left[\begin{array}{c}
\mathbf{I}_{k}\\
\mathbf{B}\\
\end{array}\right],
\end{equation*}
where $\mathbf{I}_{k}$ is the $k \times k$ identity matrix and $\mathbf{B} \in \mathbb{F}^{n-k \times k}_{q}$ generates the $n-k$ parity symbols. Since the code is systematic, the objects $\mathbf{x}_{1}$ and $\{\mathbf{z}_{j}, \forall \gamma_{j} \geq \frac{k}{2}\}$ can be retrieved by downloading the contents from the $k$ systematic nodes. If $\gamma_{j} < \frac{k}{2}$, then choose a subset of $2\gamma_j$ nodes from $\mathcal{N}_{j}$ to obtain 
\begin{equation*}
\mathbf{y} = \mathbf{G}_{\gamma_{j}}\mathbf{z}_{j},
\end{equation*}
where $\mathbf{G}_{\gamma_{j}} \in \mathbb{F}^{2\gamma \times k}$ is a submatrix of $\mathbf{G}_{S}$. If $\mathbf{G}_{\gamma_{j}}$ satisfies Proposition \ref{prop1}, then $\mathbf{z}_{j}$ can be recovered. Note that the submatrix satisfying {\bf Criterion 2} is most likely to come from $\mathbf{B}$. Indeed, suppose that any row of $\mathbf{I}_{k}$ is taken, then since its length $k$ satisfies $k>2\gamma$, any pattern of consecutive $2\gamma$ zeroes results in a $2\gamma \times 2\gamma$ submatrix which is not full rank, which is likely to happen whenever $2\gamma << k$. Restricting to the matrix ${\bf B}$ which has only $n-k$ rows leads to the constraint that $\gamma_{j}$-sparse updates can be recovered with $2\gamma_{j}$ I/O reads only for $2\gamma_{j}<n-k$, that is $\gamma_{j} < \frac{n-k}{2}$. The number of I/O reads to retrieve $\mathbf{z}_{j}$ with a systematic code whose matrix $B$ satisfies {\bf Criterion 2} is 
\begin{eqnarray*}
\eta_{j} = \left\{ \begin{array}{ccccc}
2\gamma_{j}, \mbox{ if } \gamma_{j} \leq \frac{n-k}{2} \\
k, \mbox{otherwise}.
\end{array} 
\right.
\end{eqnarray*}
Since we are interested in $\gamma_j<\frac{k}{2}$, either we have $\frac{n-k}{2} < \frac{k}{2} \iff \frac{k}{n} > \frac{1}{2}$, which implies that systematic erasure coding can only recover less than $\lceil \frac{k}{2} \rceil - 1$ sparse levels with reduced I/O, or, if $\frac{n-k}{2} \geq \frac{k}{2} \iff \frac{k}{n} \leq \frac{1}{2}$, and it can recover up to $\lceil \frac{k}{2} \rceil - 1$ sparse levels (which is the same as that of non-systematic encoding). The total number of I/O reads to retrieve $\mathbf{x}_{l}$ in the differential set up is 
$$\eta(\mathbf{x}_{l}) = k + \sum_{j = l' + 1}^{l} \eta_{j},$$ 
where $l' = 1$ for basic encoding. For the optimized method, $l' \leq l$ corresponds to the most recent version such that $\gamma_{l'} \geq \frac{k}{2}$. The total number of I/O reads to retrieve the first $l$ versions is also 
$$\eta(\mathbf{x}_{1}, \mathbf{x}_{2}, \ldots, \mathbf{x}_{l}) = k + \sum_{j = 2}^{l} \eta_{j}.$$


\begin{example}
Similarly to Example \ref{example_non_sys}, an $(n-k) \times k$ Cauchy matrix can be used to construct the matrix $\mathbf{B}$, resulting in an MDS code. 
\end{example}


\subsection{I/O Benefits: An Illustrative Example} 
\label{sec2:subsec:example}

Consider a differential storage system that stores $L = 5$ versions of an object of size $k = 10$ using a $(20, 10)$ erasure code that satisfies the desired design criteria. Let the sparsity levels of subsequent versions be $\{\gamma_{j} ~|~ 2 \leq j \leq L\} = \{3, 8, 3, 6\}$.  We also assume that $\mathbf{x}_{1}$ itself is not sparse. 
We compute the number $\eta(\xv_l)$ of I/O reads needed to retrieve the $l$th version $\xv_l$. Since the employed code has rate $\frac{k}{n}=\frac{1}{2}$, the below given I/O read numbers are applicable for both systematic and non-systematic cases. 

\textbf{Basic encoding (Step $j+1$, $j\leq L-1$).} In this technique, the stored objects are $\{\mathbf{x}_{1}, \mathbf{z}_{2}, \mathbf{z}_{3}, \mathbf{z}_{4}, \mathbf{z}_{5}\}$. The number of I/O reads to retrieve $\mathbf{x}_{1}, \mathbf{z}_{2}, \mathbf{z}_{3}, \mathbf{z}_{4}, \mathbf{z}_{5}$ are $10, 6, 10, 6,  10$, respectively. Thus, $\{\eta(\mathbf{x}_{l}), 1 \leq l \leq 5 \}$ is $\{10,  16,  26,  32,  42\}$. The total I/O reads to recover all the $5$ versions is 42 (instead of 50 for the non-differential method). Thus, there is a reduction in the number of I/O reads to retrieve all the versions.  

\textbf{Optimized encoding (Optimized Step $j+1$, $j\leq L-1$).} In this method, the stored objects are $\{\mathbf{x}_{1}, \mathbf{z}_{2}, \mathbf{x}_{3}, \mathbf{z}_{4}, \mathbf{x}_{5}\}$. The number of I/O reads to retrieve $\mathbf{x}_{1}, \mathbf{z}_{2}, \mathbf{x}_{3}, \mathbf{z}_{4}, \mathbf{x}_{5}$ are $10, 6, 10, 6,  10$, respectively. Thus, $\{\eta(\mathbf{x}_{l}), 1 \leq l \leq 5 \}$ is $\{10, 16, 10, 16,  10\}$. However, the number of I/O reads to recover all the $5$ versions is same as the basic encoding method. Note that the number of I/O reads to retrieve individual versions is lower than the basic encoding method. 




\section{Static Resilience Analysis}
\label{sec3}

We next compute the static resilience (the amount of failures that the system tolerates based on the initial redundancy, if no further remedial actions are taken) of the proposed SEC coding strategies that exploit the sparsity across subsequent versions. 
We suppose that $L$ versions of a data object are stored, namely the pattern of stored data is $\{\mathbf{x}_{1}, \mathbf{z}_{2}, \mathbf{z}_{3}, \mathbf{z}_{L-1}, \mathbf{z}_{L}\}$, and encoded pieces for any of these versions are stored in $n$ nodes. 
The static resilience is computed for two practical redundancy placement choices:
a (\textbf{dispersed placement}), where differences and first version are all stored in different nodes, involving a total of $nL$ distinct nodes, and a (\textbf{colocated placement}), when all the versions' encoded pieces are stored in a common set of $n$ nodes.

We use the non-differential strategy where each version is coded and stored individually as a baseline, and demonstrate that for both placements, both the systematic and non-systematic SEC schemes achieve the same resiliency as non-differential coding for a given overall storage overhead (i.e., there is no resiliency compromise), even though both new coding technique results in a reduction in access I/O when retrieving all the previous versions of the data.  

When comparing systematic and non-systematic strategies, a key difference is that the number of submatrices of $\mathbf{G}_{S}$ satisfying Criterion 2 is fewer compared to that of $\mathbf{G}_{N}$ (as explained in Section \ref{sec3:subsec:sys_retr}). We will demonstrate that this reduction in the number of options results in a poorer resilience of individual version differences for the systematic SEC, compared to its non-systematic counterpart. However, the best resilience, taking into account all the versions of the data, is realized, for each of the two coding strategies, when the encoded pieces for each of the versions (or differences) are stored in the same set of nodes - which ultimately leads to the same net resilience for all the coding strategies.  

In the following, we assume that individual nodes fail with a probability $p$ and the failure events are independent.

\subsection{SEC Analysis} 

We start by computing the probability of losing one individual version among 
$\{\mathbf{x}_{1}, \mathbf{z}_{2}, \mathbf{z}_{3}, \mathbf{z}_{L-1}, \mathbf{z}_{L}\}$.
We assume SEC are MDS, since Cauchy matrices based SEC are MDS.

For a MDS SEC, whether it is systematic or not, any $k$ nodes storing the encoded pieces of $\mathbf{x}_{1}$ are sufficient to retrieve it. Thus $\mathbf{x}_{1}$ is lost if the event 
\begin{eqnarray*}
\mathcal{E}_{1} & = & \{ n - k + 1 \mbox{ or more nodes fail}\}
\end{eqnarray*}
occurs. The probability of losing $\mathbf{x}_{1}$ is then given by 
\begin{equation}\label{de1_nonsys}
\mbox{Prob}_{N}(\mathcal{E}_{1}) = 
\mbox{Prob}_{S}(\mathcal{E}_{1}) =  
\sum_{j=0}^{k-1}C^{n}_{n-j} p^{n-j}(1-p)^{j}.
\end{equation}
For any other arbitrary version $l$ ($2 \leq l \leq L$) in the {\bf non-systematic case}, where only the difference with its previous version $\mathbf{z}_{l}$ is stored, any $\upsilon_{l} = \mbox{min}(2\gamma_{l}, k)$ nodes suffice to retrieve $\mathbf{z}_{l}$ since any $\upsilon_{l} \times k$ submatrix of $\mathbf{G}_{N}$ satisfies Criterion 2. As a result, the object $\mathbf{z}_{l}$ is lost if the event 
\begin{eqnarray*}
\mathcal{E}_{l} & = & \{n - \upsilon_{l} + 1 \mbox{ or more nodes fail}\}
\end{eqnarray*}
occurs. The probability of losing $\mathbf{z}_{l}$ is given by
\begin{equation}
\label{de2_nonsys}
\mbox{Prob}_{N}(\mathcal{E}_{l})  =  
\sum_{j=0}^{\upsilon_l-1}C^{n}_{n-j} p^{n-j}(1-p)^{j}.
\end{equation}

Similarly, in the {\bf systematic case}, for $\mathbf{z}_{l}$ with $\gamma_{l} \geq \frac{k}{2}$, we have
\begin{eqnarray}
\mbox{Prob}_{S}(\mathcal{E}_{l}) & = & \mbox{Prob}_{S}(\mathcal{E}_{1}).
\end{eqnarray}
However, for $\mathbf{z}_{l}$ with $l < \frac{k}{2}$, not all combinations of $2\gamma_{l}$ nodes can retrieve $\mathbf{z}_{l}$ since only few $2\gamma_{l} \times k$ submatrices of $\mathbf{G}_{N}$ satisfies Criterion $2$. As a result, the probability of losing $\mathbf{z}_{l}$ is strictly lower bounded as
\begin{eqnarray}
\label{de2_sys}
\mbox{Prob}_{S}(\mathcal{E}_{l}) & > & 
\sum_{j=0}^{2\gamma_l-1}C^{n}_{n-j} p^{n-j}(1-p)^{j}.
\end{eqnarray}
Thus, we have
\begin{equation}
\label{ineq_sys_non_sys}
\mbox{Prob}_{S}(\mathcal{E}_{l}) \geq \mbox{Prob}_{N}(\mathcal{E}_{l}) \mbox{ for } 1 \leq l \leq L,
\end{equation}
where the inequality holds when $\gamma_{l} < \frac{k}{2}$, and equality otherwise.

{\bf Dispersed placement:} In a dispersed placement, the probability of retaining all the $L$ versions of the object are 
\begin{equation}
\label{eq:x1x2_avail_dist}
P_{d}(\mathbf{x}_{1}, \mathbf{x}_{2}, \ldots, \mathbf{x}_{L}) =  \prod_{l = 1}^{L} (1 - \mbox{Prob}(\mathcal{E}_{l})).
\end{equation}

Using the inequality of \eqref{ineq_sys_non_sys} in \eqref{eq:x1x2_avail_dist}, it is clear that non-systematic SEC provides at least same level of resilience as that of systematic codes if dispersed placement is employed. 

{\bf Colocated placement:} In a colocated placement, the probability of retaining all the $L$ versions of the object are 
\begin{eqnarray}
\label{eq:x1x2_avail_colc}
P_{c}(\mathbf{x}_{1}, \mathbf{x}_{2}, \cdots, \mathbf{x}_{L}) & = & 1 - \mbox{Prob}(\cup_{l = 1}^{L} \mathcal{E}_{l}),
\end{eqnarray}

For colocated placement, all the $L$ objects $\{ \mathbf{x}_{1}, \mathbf{z}_{2}, \cdots, \mathbf{z}_{L}\}$ can be recovered with probability one if any $k$ nodes are alive. Although some objects $\{ \mathbf{z}_{l} ~|~ \gamma_{l} < \frac{k}{2}\}$ can be retrieved even with the loss of $n - 2\gamma_{l} > n - k$ nodes, such failure patterns nevertheless result in the loss of $\mathbf{x}_{1}$, thereby making the recovery of all the $L$ versions impossible. Hence, the existence of any $k$ live nodes guarantees the recovery of $\{ \mathbf{x}_{1}, \mathbf{z}_{2}, \cdots, \mathbf{z}_{L}\}$ and this argument is applicable for both systematic and non-systematic SEC. With that, the probability of retaining all the $L$ versions is same for both systematic and non-systematic SEC and is given by
\begin{eqnarray}
\label{eq:colc_sys_nonsys}
P_{c}(\mathbf{x}_{1}, \mathbf{x}_{2}, \cdots, \mathbf{x}_{L}) & = & 1 - \mbox{Prob}_{S}(\mathcal{E}_{1}).
\end{eqnarray}
Note that if any $2\gamma_{l}$ nodes are sufficient to recover the sparse updates for the the non-systematic erasure codes, only specific patterns of $2\gamma_{l}$ nodes are applicable for the systematic codes. More discussion on the effects of different possible options to recover sparse updates are discussed in Section \ref{sec:num;results}.

Finally, comparing \eqref{eq:colc_sys_nonsys} and \eqref{eq:x1x2_avail_dist}, we conclude that colocated placement yields higher resilience for both systematic and non-systematic erasure codes than the dispersed placement. Henceforth, the resilience values calculated in colocated placement are used as the resilience of the systematic and non-systematic SEC methods. 

\subsection{Non-differential Coding (Baseline)} 

Any $k$ nodes of the $n$ nodes where encoded pieces of a version $\mathbf{x}_{l}$ for $1 \leq l \leq L$ are stored, is adequate to retrieve that version, i.e., $\mbox{Prob}_{ND}(\mathcal{E}_{1})$ = $\mbox{Prob}_{N}(\mathcal{E}_{1})$ in \eqref{de1_nonsys}.

Thus, for dispersed placement, probability of retaining\\ $\{\mathbf{x}_{1}, \mathbf{x}_{2}, \ldots, \mathbf{x}_{L} \}$ is
\begin{eqnarray}
\label{non_diff_dist}
P_{d}(\mathbf{x}_{1}, \mathbf{x}_{2}, \ldots, \mathbf{x}_{L}) = (1 - \mbox{Prob}_{ND}(\mathcal{E}_{1}))^{L},
\end{eqnarray}
For $l \geq 2$, we have $\mbox{Prob}_{ND}(\mathcal{E}_{1}) \geq \mbox{Prob}_{N}(\mathcal{E}_{l})$, where the inequality holds when $\gamma_{l} < \frac{k}{2}$, and equality otherwise. Using the above inequality in \eqref{non_diff_dist}, it is clear that non-differential erasure codes provide at most as much resilience as that of differential non-systematic SEC in dispersed placement.

For the case of collocated placement while using non-differential coding, 
\begin{eqnarray}
\label{non_diff_colc}
P_{c}(\mathbf{x}_{1}, \mathbf{x}_{2}, \ldots, \mathbf{x}_{L}) & = & (1 - \mbox{Prob}_{ND}(\mathcal{E}_{1})).\\
& \geq & P_{d}(\mathbf{x}_{1}, \mathbf{x}_{2}, \ldots, \mathbf{x}_{L})
\end{eqnarray}
Therefore, non-differential erasure codes have same resilience as that of the differential SEC codes, when encoded information about all the versions are colocated. Similar to the SEC methods, colocated placement is also optimal for the non-differential encoding method.

\subsection{Example}
\label{running_ex1}

We revisit the calculations above with a concrete example with specific parameter choices. This helps us to obtain specific values and compare the static resilience of the different schemes explicitly.

\begin{paragraph}{Set-up}
Consider a system that stores $2$ versions of a data object. Let the original data object be a binary file of size 3KB. We represent this object as a $3$-length vector $\mathbf{x}_{1}$ over the finite field of size $1\mbox{KB}$, i.e., let $\mathbf{x}_{1} \in \mathbb{F}^{3}_{q}$ where $q = 1024$. Further, let the second version of the object be such that only the first $1$KB of the binary file has been modified. It is important to note that the quantum and location of changes are not known a priori, and the coding scheme, decided in advance, has to work irrespective of the specificities of the update. In the finite field level, the second version is represented as $\mathbf{x}_{2} = \mathbf{x}_{1} + \mathbf{z}_{2}$, where $\mathbf{z}_{2}$ is $1$-sparse given by 
\begin{equation*}
\mathbf{z}_{2} =  \left[\begin{array}{c}
X\\
0\\
0\\
\end{array}\right] \in \mathbb{F}^{3}_{q},
\end{equation*}
where $X$ denotes a non-zero element of $\mathbb{F}_{q}$. 
\end{paragraph}

\begin{figure}
\begin{center}
\includegraphics[width=3in]{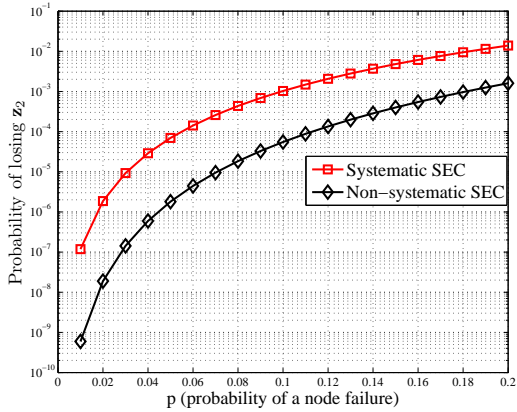}
\vspace{-0.5cm}
\caption{Probability of losing the $1$-sparse difference object $\mathbf{z}_{2}$ in example of Subsection \ref{running_ex1}: $\mbox{Prob}_{S}(\mathcal{E}_{2})$ in \eqref{pe_z2_sys} and $\mbox{Prob}_{N}(\mathcal{E}_{2})$ in \eqref{pe_z2_nonsys} for the systematic and non-systematic SEC respectively.}
\label{1sparseloss}
\end{center}
\end{figure}

\begin{paragraph}{The non-systematic case}
For the system parameters of our set-up, we pick a $(6, 3)$ non-systematic MDS code whose generator matrix $\mathbf{G}_{N} \in \mathbf{F}^{6 \times 3}_{q}$ is carved from a Cauchy matrix. Subsequently, $\mathbf{z}_{2}$ is encoded using $\mathbf{G}_{N}$ to obtain the codeword $\mathbf{c}_{2} = \mathbf{G}_{N}\mathbf{z}_{2} \in \mathbb{F}^{6}_{q}$. It is clear that the number of I/O reads needed to retrieve the first $2$ versions is $5$. Since the code is MDS, the probability of losing $\mathbf{x}_{1}$ is given by 
\begin{eqnarray}
\mbox{Prob}_{N}(\mathcal{E}_{1}) & = & p^6 + C^{6}_{5} p^5(1-p) + C^{6}_{4} p^4(1-p)^{2}.
\end{eqnarray}
Similarly, the probability of losing $\mathbf{z}_{2}$ is given by
\begin{eqnarray}
\label{pe_z2_nonsys}
\mbox{Prob}_{N}(\mathcal{E}_{2}) & = & p^6 + C^{6}_{5} p^5(1-p),\\
& < & \mbox{Prob}_{N}(\mathcal{E}_{1}) \nonumber.
\end{eqnarray}
The proposed non-systematic SEC opportunistically exploits sparsity in $\mathbf{z}_{2}$ to provide higher resilience for the object $\mathbf{z}_{2}$ than the object $\mathbf{x}_{1}$. However, in colocated placement, the resilience for both the objects $\mathbf{x}_{1}$ and $\mathbf{x}_{2}$ is dominated by that for $\mathbf{x}_{1}$. Hence, we have
\begin{eqnarray*}
P_{c}(\mathbf{x}_{1}, \mathbf{x}_{2}) & = & 1 - \left(p^6 + C^{6}_{5} p^5(1-p) + C^{6}_{4} p^4(1-p)^{2}\right).
\end{eqnarray*}
\end{paragraph}
\begin{center}
\begin{table*}
\begin{small}
\caption{Differential vs. Non-differential Erasure coding (numbers are based on Example of Subsection \ref{running_ex1})}
\begin{center}
\begin{tabular}{|c|c|c|c|c|c|c|c|c|c|c|}
\hline Version & Parameter & Differential & Differential & Non-differential \\
& & Non-systematic & Systematic & Systematic\\
\hline 1st & Encoding & $\mathbf{c}_{1} = \mathbf{G}_{N}\mathbf{x}_{1}$ & $\mathbf{c}_{1} = \mathbf{G}_{S}\mathbf{x}_{1}$ & $\mathbf{c}_{1} = \mathbf{G}_{S}\mathbf{x}_{1}$\\
\hline  & Encoding Complexity & matrix multiplication & matrix multiplication & matrix multiplication\\
& & & for parity only & for parity only\\
\hline & Nr. of nodes & 6 & 6 & 6\\
\hline & Decoding Complexity & inverse operation & low & low\\
\hline & I/O reads & 3 & 3 & 3\\
\hline 2nd & Encoding & $\mathbf{c}_{2} = \mathbf{G}_{N}\mathbf{z}_{2}$ & $\mathbf{c}_{2} = \mathbf{G}_{S}\mathbf{z}_{2}$ & $\mathbf{c}_{2} = \mathbf{G}_{S}\mathbf{x}_{2}$\\
&  & & & \\
\hline & Encoding Complexity & matrix multiplication & matrix multiplication & matrix multiplication\\
& &  & for parity only & for parity only\\
\hline & Nr. of nodes & 6 & 6 & 6\\
\hline & Decoding Complexity & sparse reconstruction & sparse reconstruction & low\\
\hline & I/O reads & 2 & 2 & 3\\
\hline
\end{tabular}
\end{center}
\label{table_rep_vs_er}
\end{small}
\end{table*}
\end{center}

\begin{paragraph}{The systematic case}
Now let us pick a $(6, 3)$ systematic erasure code whose generator matrix is given by $$\mathbf{G}_{S} = [\mathbf{I}_{3} ~\mathbf{B}^{T}]^{T} \in \mathbf{F}^{6 \times 3}_{q}$$ where $\mathbf{B} \in \mathbf{F}^{3 \times 3}_{q}$ is a Cauchy matrix. We encode the first version $\mathbf{x}_{1} \in \mathbb{F}^{3}_{q}$ as  
\begin{equation*}
\mathbf{c}_{1} = \mathbf{G}_{S} \mathbf{x}_{1} = \left[\begin{array}{c}
\mathbf{x}_{1}\\
\mathbf{B}\mathbf{x}_{1}\\
\end{array}\right] \in \mathbb{F}^{6}_{q}.
\end{equation*}
The storage overhead for the first version is $\frac{n}{k} = 2$. 
Subsequently, $\mathbf{z}_{2}$ is encoded using $\mathbf{G}_{S}$ to obtain the codeword 
\begin{equation*}
\mathbf{c}_{2} = \mathbf{G}_{S}\mathbf{z}_{2} = \left[\begin{array}{c}
\mathbf{z}_{2}\\
\mathbf{B}\mathbf{z}_{2}\\
\end{array}\right]\in \mathbb{F}^{6}_{q}.
\end{equation*}
For this scheme too, the number of I/O reads needed to retrieve the first $2$ versions is $5$.

Since the code is MDS, probability of losing $\mathbf{x}_{1}$ is given by
\begin{eqnarray}
\label{pe_x1_sys}
\mbox{Prob}_{S}(\mathcal{E}_{1}) & = & p^6 + C^{6}_{5} p^5(1-p) + C^{6}_{4} p^4(1-p)^{2}.
\end{eqnarray}
The difference object $\mathbf{z}_{2}$ is lost if $5$ or more nodes fail. In addition, since not all $2 \times 3$ submatrices of $\mathbf{G}_{s}$ satisfy Criterion 2, there are some specific 4 node failure patterns that lead to the loss of the object. Considering all possibilities, we have 
\begin{eqnarray*}
\mathcal{E}_{2} & = & \{\mbox{5 or more nodes fail}\} \cup \{\mbox{specific 4 nodes failure}\}.
\end{eqnarray*}
Hence, the probability of losing $\mathbf{z}_{2}$ is given by
\begin{eqnarray}
\label{pe_z2_sys}
\mbox{Prob}_{S}(\mathcal{E}_{2}) & = & p^6 + C^{6}_{5} p^5(1-p) + 12 p^4(1-p)^2,\\
& < & \mbox{Prob}_{S}(\mathcal{E}_{1}).\nonumber
\end{eqnarray}

In Fig. \ref{1sparseloss}, we compare $\mbox{Prob}_{S}(\mathcal{E}_{2})$ (for systematic SEC) and $\mbox{Prob}_{N}(\mathcal{E}_{2})$ (for non-systematic SEC) for different values of $p$. The plots show that systematic SEC, while exploiting the sparsity to reduce the I/O reads, does not provide higher protection for the difference object $\mathbf{z}_{2}$ when compared to the non-systematic SEC. Nevertheless, in colocated placement, the resilience for both the objects $\mathbf{x}_{1}$ and $\mathbf{x}_{2}$ is dominated by that for $\mathbf{x}_{1}$. Hence, we have
\begin{eqnarray*}
P_{c}(\mathbf{x}_{1}, \mathbf{x}_{2}) & = & 1 - \left(p^6 + C^{6}_{5} p^5(1-p) + C^{6}_{4} p^4(1-p)^{2}\right).
\end{eqnarray*}
From the point of view of failure events for $\mathbf{z}_{2}$, there are a total of $63$ possible failure patterns of nodes. Among them, both non-systematic and systematic methods can recover from 41 patterns due to the inherent MDS property, wherein, sparsity is not exploited and $\mathbf{z}_{2}$ is retrieved with 3 I/O reads. In addition to these failures, the non-systematic code can resist 15 more failure patterns arising from failure of all possible 4 node combinations. However, for such a case, the systematic SEC can resist only in additional 3 cases. Therefore, non-systematic SEC can handle a total of 56 failure patterns, while systematic can handle only 44 patterns. Thus, non-systematic SEC can opportunistically improve the resilience for the difference object $\mathbf{z}_{2}$ compared to systematic SEC.
\end{paragraph}

\begin{figure}
\begin{center}
\includegraphics[width=3in]{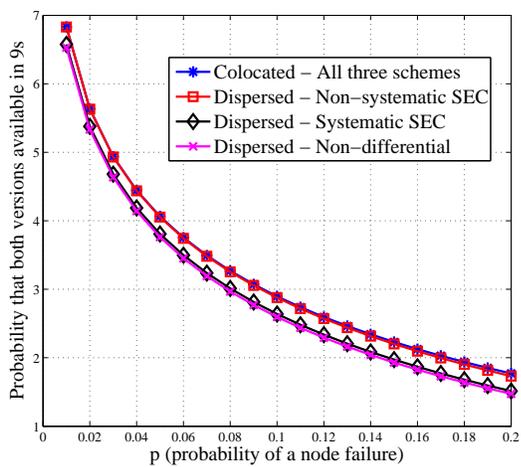}
\vspace{-0.5cm}
\caption{Resilience of colocated and distributed placement strategies for the example of Subsection \ref{running_ex1}. On the $y$-axis, the probability of joint availability of $\mathbf{x}_{1}$ and $\mathbf{x}_{2}$ in 9s format is shown, defined as $-\mbox{log}_{10}(1 - p_{d}(\mathbf{x}_{1}, \mathbf{x}_{2}))$ resp. as $-\mbox{log}_{10}(1 - p_{c}(\mathbf{x}_{1}, \mathbf{x}_{2}))$ for dispersed, resp. colocated placement.}
\label{wholeobjresplot}
\end{center}
\end{figure}

In Fig. \ref{wholeobjresplot} we show the probability of availability of all the versions (i.e., both $\mathbf{x}_{1}$ and $\mathbf{x}_{2}$ in Example of Subsection \ref{running_ex1}) of the data objects  for both dispersed (in \eqref{eq:x1x2_avail_dist}) and colocated placements (in \eqref{eq:x1x2_avail_colc}) for a range of values for the probability of failure $p$ of individual storage nodes. 
For colocated placement, all the three schemes have the same resilience to store $\mathbf{x}_{1}$ and $\mathbf{x}_{2}$. However, for dispersed placement non-systematic SEC provides higher resilience than the other two schemes.

\subsection{Resilience analysis summary}
A comprehensive summary of the resilience of the three schemes is provided in Table \ref{table_rep_vs_er}. The highlights are as follows:

(1) For any given choice of coding scheme, colocated placement of encoded pieces of multiple versions of a data object results in higher resilience than the dispersed placement of these encoded pieces. However, going into more subtlety, there is one advantage of using non-differential coding with dispersed storage, namely, some random versions (but not the whole versioned archive) will survive with a greater probability than the probability of survival of the whole archive for the colocated case. In contrast, since the basic SEC stores differences, this is not the case. The optimized SEC however benefits from this serendipity as well, but less often than for the non-differential case.

(2) For colocated placement, the non-systematic SEC provides the same resilience to retrieve the whole archive as that of the systematic SEC. However, larger number of options to recover the sparse updates for the former method results in higher resilience for storing the individual difference objects than the latter method. In addition, whenever $\frac{k}{n} > \frac{1}{2}$, non-systematic SEC works for a larger range of sparseness levels with reduced I/O than the systematic SEC.

(3) For colocated placement, the non-systematic SEC provides the same resilience as that of the non-differential method, but with the advantage of requiring fewer I/O reads than the latter when retrieving the whole versioned archive. 

Finally, we note that for the non-systematic SEC, the individual version deltas have higher static resilience (Fig \ref{1sparseloss}), however, there is no advantage in this, given that the actual retrievability is bottlenecked by the facts that (i) colocation is best strategy, and (2) the availability of the first version (where the whole object is coded) thus dominates and determines retrievability of the whole archive. This suggests that the additional resilience of individual deltas in the non-systematic SEC is wasteful in terms of storage resources, and that there is potential room to reduce storage overhead while storing with non-systematic SEC. Study of storage optimization for the non-systematic SEC will be part of our future work.

%
%

\section{Versioned Archive Retrieval I/O} 
\label{sec:num;results}

In the previous section, we studied the static resilience for the various schemes - and demonstrated that, for a given storage overhead, all the strategies achieve the same effective fault-tolerance for the colocated placement scenario, which is the best case, and hence practical. We next focus on the disk I/Os involved while retrieving a versioned archive. The actual savings when using SEC depend on the actual sparsity of the differences across versions, and hence we obtain these results numerically for different example workloads.

\subsection{Non-systematic and Systematic SEC}
\label{sec:simresfizedgamma}
We consider the object from Section \ref{running_ex1} where we chose a specific case of $\mathbf{z}_{2}$ being $1$-sparse. However, in general, the sparsity level of $\mathbf{z}_{2}$ is a random variable over the support $\{1, 2, 3\}$. Since $k = 3$, the sparsity can be exploited only when $\gamma_{2} = 1$. Henceforth, we denote $\gamma_{2}$ by $\gamma$. We now discuss the number of options for the systematic and non-systematic codes to retrieve the $1$-sparse object with just 2 I/O reads. For the non-systematic code, since any $2 \times 3$ submatrix of  $\mathbf{G}_{N}$ satisfies the Criterion 2, there are a total of 15 such matrices. However, for the systematic code, only $3$ submatrices of $\mathbf{G}_{S}$ satisfy Criterion 2. 

\begin{figure}
\begin{center}
\includegraphics[width=3in]{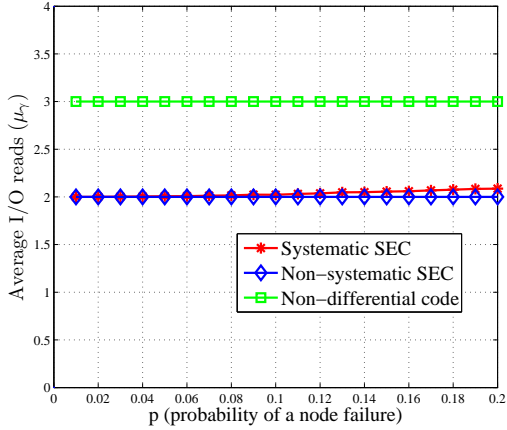}
\vspace{-0.2cm}
\caption{Average I/O reads $\mu_{\gamma}$ given in \eqref{avg_i_o_metric} for $\gamma = 1$ to retrieve the $1$-sparse object $\mathbf{z}_{2}$ for different erasure codes. These results are obtained for the parameters $n = 6$ and $k = 3$ in Section \ref{running_ex1}.}
\label{fig_avg_reads1}
\end{center}
\end{figure}

We next discuss the implications of having reduced number of submatrices satisfying Criterion 2 on the (average) I/O for retrieving the $\gamma$-sparse object $\mathbf{z}_{2}$. The following approach is needed only for systematic SEC, since for the non-systematic SEC, every pattern of $k$ or more live nodes has a subset of size $2\gamma$ that can recover $\mathbf{z}_{2}$, and hence $2\gamma$ I/O reads are guaranteed. We randomly generate a large ensemble of failure patterns of nodes by assuming that each node fails independently with probability $p$. We traverse through every failure pattern to identify if at least $k$ live nodes remain, in order to recover the object. If $k$ or more nodes are alive, we find the possibility of retrieving $\mathbf{z}_{2}$ with just $2\gamma$ I/O reads by checking Criterion 2 on the submatrices corresponding to live nodes. Accordingly, we retrieve the entire object with just $2\gamma$ I/O reads, otherwise, it is recovered using $k$ reads. Provided that a failure pattern leaves $k$  or more live nodes, we compute the percentage of cases when only $2\gamma_{2}$ reads are sufficient (denoted by $p_{2\gamma}$) and that of when $k$ I/O reads are needed (denoted by $p_{k}$). Then the average number of I/O reads is computed as 
\begin{equation}
\label{avg_i_o_metric}
\mu_{\gamma} = p_{2\gamma}2\gamma + p_{k}k. 
\end{equation}
For the parameters in Section \ref{running_ex1}, we have $\mu_{1} = 2p_{2} + 3p_{3}$. In Fig. \ref{fig_avg_reads1}, we plot $\mu_{1}$ for different values of $p$ from $0.01$ to $0.2$. The plot shows that when $p$ is small, systematic SEC recovers $\mathbf{z}_{2}$ with just 2 I/O reads. However, as $p$ increases, a non-negligible number of error patterns occurs for which no subset of live nodes with cardinality 2 can recover $\mathbf{z}_{2}$. The two other schemes are also shown: (i) the lower one (with constant reads of 2) corresponds to the non-systematic SEC as every pattern of $k$ or more live nodes has a subset of size 2 that can recover $\mathbf{z}_{2}$, and (ii) the top one (with constant reads of 3) corresponds to non-differential encoding where sparsity cannot be exploited.

A similar experiment is repeated with parameters $n = 10, k = 5$ and $L = 2$. The first version $\mathbf{x}_{1}$ is fully encoded and the second version is encoded with the SEC schemes. Since $k = 5$, we apply the average I/O reads study to $\gamma = 1$ and $2$. Provided that failure patterns are such that $k$ or more nodes are alive, the average number of I/O reads $\mu_{\gamma}$ (given in \eqref{avg_i_o_metric}) to retrieve $\mathbf{z}_{2}$ are provided in Fig. \ref{fig_avg_reads2} for (i) $\gamma = 1$ and (ii) $\gamma = 2$. The plots show that for $\gamma = 1$, systematic SEC retrieves $\mathbf{z}_{2}$ using $2$ I/O reads almost always for values of $p$ till $p = 0.2$. However, for $\gamma = 2$, there is marginal increase in the values of $\mu_{\gamma}$ for higher values of $p$ till $p = 0.2$.

\begin{figure}
\begin{center}
\includegraphics[width=3.5in]{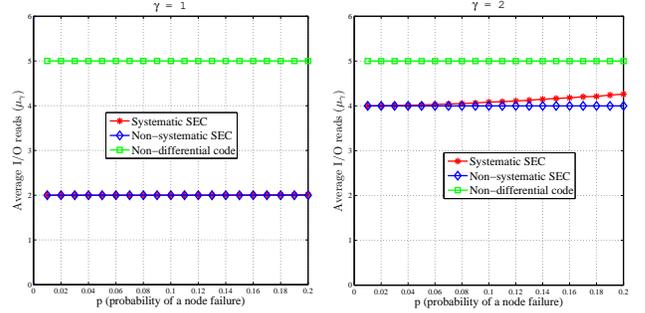}
\vspace{-0.8cm}
\caption{Average I/O reads given in \eqref{avg_i_o_metric} to retrieve the sparse object $\mathbf{z}_{2}$ for different erasure codes. The results are obtained for the parameters $n = 10$ and $k = 5$. The plots on the left and right correspond to retrieving $\mathbf{z}_{2}$ which is $1$-sparse and $2$-sparse, respectively.}
\label{fig_avg_reads2}
\end{center}
\end{figure}

In conclusion (i) both variants of SEC outperform the naive solution of encoding individual versions, and (ii) while non-systematic SEC consistently performs better than systematic SEC, the differences are marginal for practical settings (where $p$) is not very high.

\subsection{Expected I/O savings with two versions}

Previously, we saw in Section \ref{sec2} that for the example from Section \ref{running_ex1}, SEC reduces the I/O reads for joint retrieval of $\mathbf{x}_{1}$ and $\mathbf{x}_{2}$ from $6$ to $5$ when $\mathbf{z}_{2}$ is $1$-sparse. We further studied settings with fixed $\gamma$ values in Section \ref{sec:simresfizedgamma} above. However, in general, the sparsity level of $\mathbf{z}_{2}$ can take any value over $\{1, 2, 3\}$. We now present numerical results on the expected number of I/O reads when $\gamma_{2}$ is random (denoted by the random variable $\Gamma_{2}$). 

For the SEC schemes, the number of I/O reads to access both the versions are $5, 6,$ and $6$ when $\mathbf{z}_{2}$ is $1$-, $2$- and $3$-sparse, respectively. Since the average number of I/O reads depends on the underlying probability mass function (PMF) on the sparsity level, we study the advantages of the proposed method by testing different PMFs that reflect different difference sparsity behaviors, in the absence of standard workloads (see e.g. \cite{dedup}) . Henceforth, we use $\Gamma$ to denote the random variable $\Gamma_{2}$ and $\gamma$ to denote its realization $\gamma_{2}$. The PMF on $\Gamma$ is denoted by $\mbox{P}_{\Gamma}(\gamma)$ for $\gamma \in \{1, 2, 3\}$.

\textbf{PMFs on sparsity.} We apply the finite support versions of the exponential distribution in parameter $\alpha > 0$ given by 
\begin{equation}
\label{normalised_exponential}
\mbox{P}_{\Gamma}(\gamma) = c e^{-\alpha \gamma}, \mbox{ for } \gamma = 1, 2, 3, 
\end{equation}
where the constant $c$ is chosen such that $\sum_{\gamma = 1}^{k} \mbox{P}_{\Gamma}(\gamma) = 1$, and referred to as \emph{truncated exponential} PMF. Likewise, a \emph{truncated Poisson} PMF with parameter $\lambda$, given by 
\begin{equation}
\label{normalised_poisson}\
\mbox{P}_{\Gamma}(\gamma) = c \frac{\lambda^{\gamma} e^{-\lambda}}{\gamma!}, \mbox{ for } \gamma = 1, 2, 3, 
\end{equation}
are also considered, where $c$ is such that $\sum_{\gamma = 1}^{k} \mbox{P}_{\Gamma}(\gamma) = 1$. These PMFs are specifically picked to study the reduction in I/O reads for two extreme scenarios: (i) the family of exponential PMFs provides thick concentration towards smaller value of $\Gamma$, whereas (ii) the family of Poisson PMFs provides thick concentration towards larger value of $\Gamma$. Thus they facilitate the study of both the best-case and worst-case scenarios for SEC. In Figures \ref{fig_exp_poiss_pmf}, we plot the PMFs in \eqref{normalised_exponential} and \eqref{normalised_poisson}, respectively for different parameters. 

\begin{figure}
\begin{center}
\includegraphics[width=3.3in]{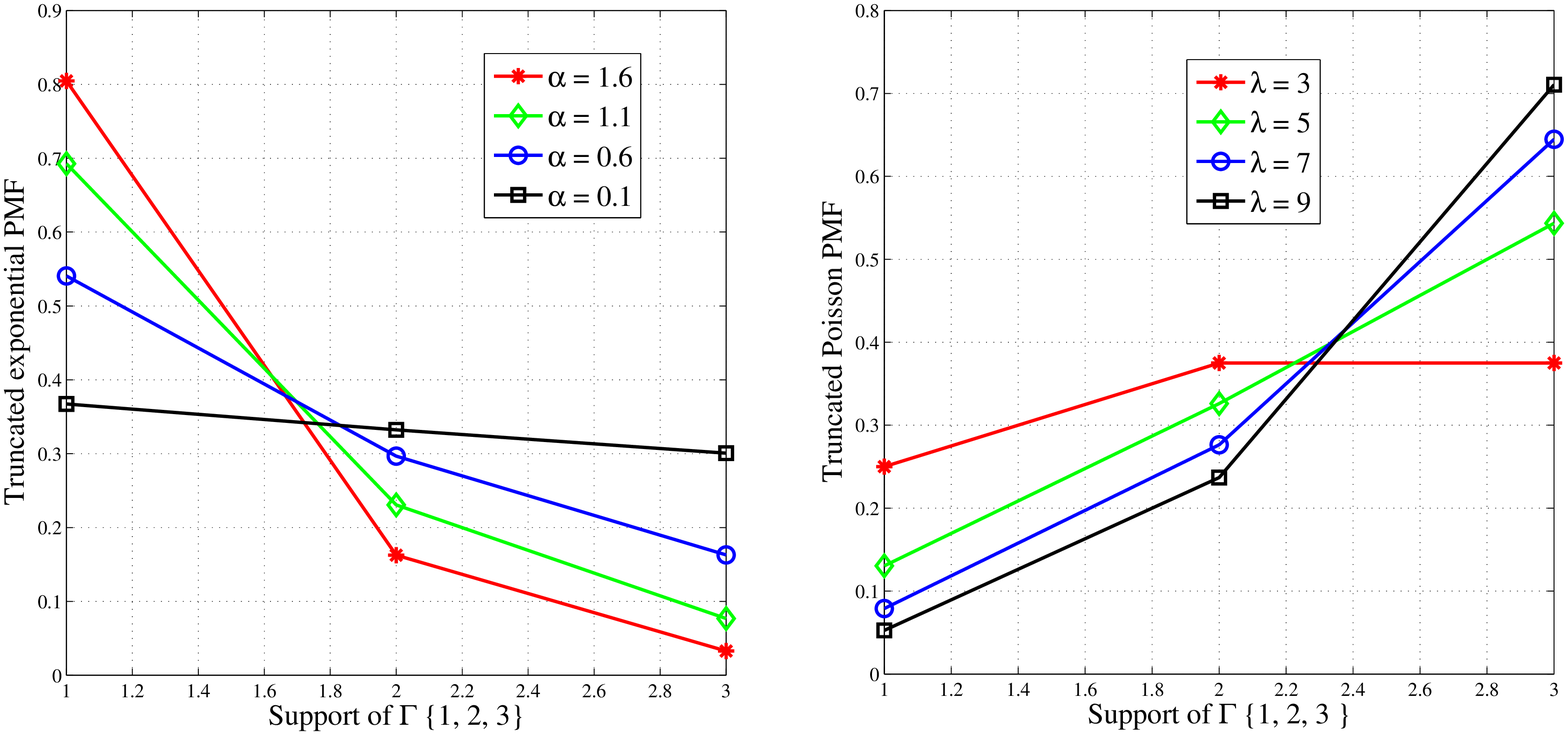}
\vspace{-0.5cm}
\caption{Truncated exponential PMFs (with parameter $\alpha$) and Poisson PMFs (with parameter $\lambda$) on $\Gamma$ for $k = 3$. The x-axis represents the support $\{1, 2, 3\}$ of $\Gamma$.}
\label{fig_exp_poiss_pmf}
\end{center}
\end{figure}

 
\textbf{Average I/O reads to retrieve $\mathbf{x}_{1}$ and $\mathbf{x}_{2}$.} For a given $\mbox{P}_{\Gamma}(\gamma)$, the average number of I/O reads for accessing the first two versions are given by
$$\mathbb{E}[\eta] = k + \sum_{\gamma = 1}^{k} \mbox{P}_{\Gamma}(\gamma) \mbox{min}(2\gamma, k).$$ In Figure \ref{fig_io_exp__poiss_adv}, we plot the average percentage reduction in the I/O reads when compared to the non-differential setup as $\frac{2k - \mathbb{E}[\eta]}{2k} \times 100$
where $2k$ is the total number of I/O reads for the non-differential scheme. The plots show a significant reduction in the I/O reads when the distribution is skewed towards smaller $\gamma$. However, as expected, the reduction is marginal otherwise.

\begin{figure}
\begin{center}
\includegraphics[width=3.3in]{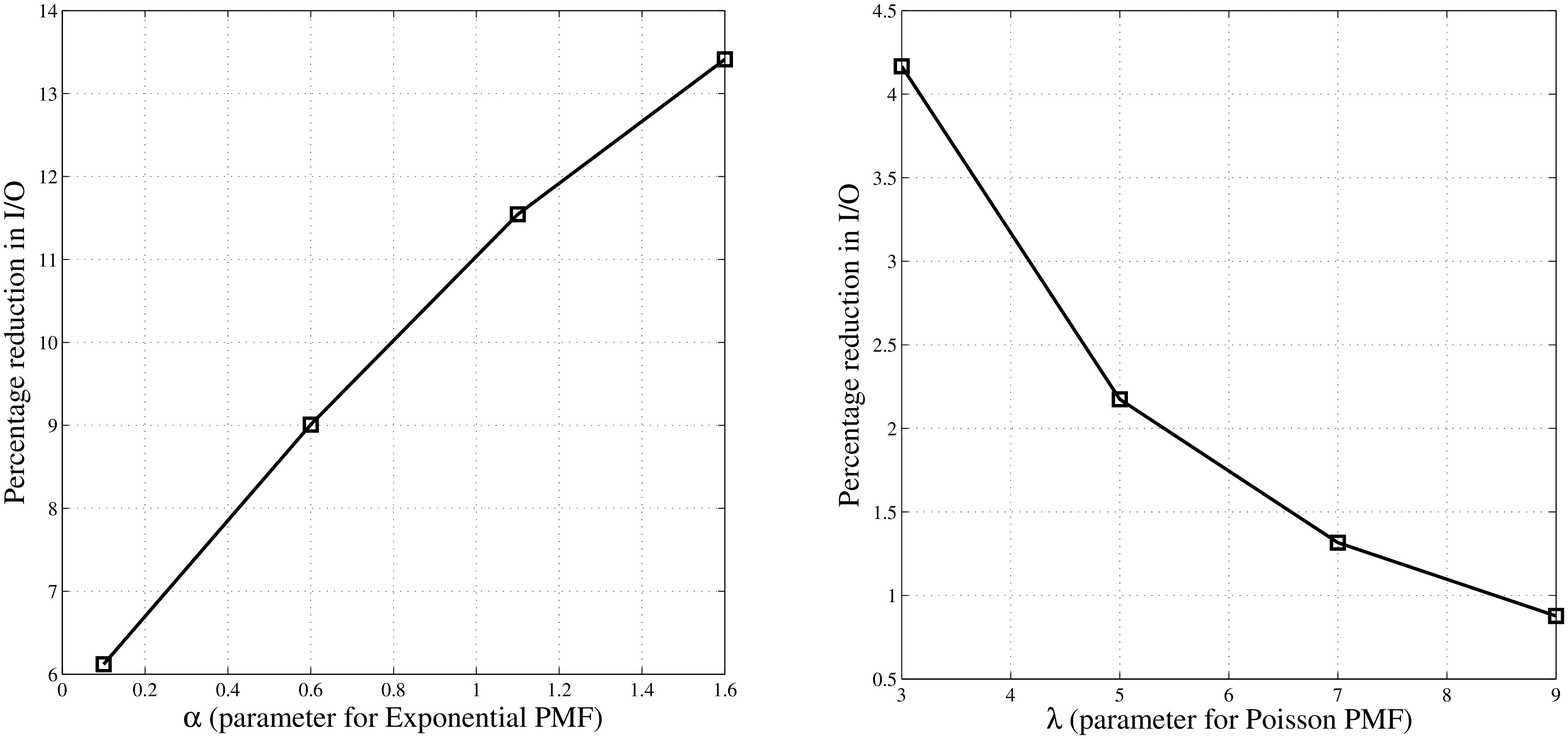}
\vspace{-0.5cm}
\caption{Average percentage reduction in the I/O reads to access $\mathbf{x}_{1}$ and $\mathbf{x}_{2}$ for PMFs in Fig. \ref{fig_exp_poiss_pmf}. Here $\alpha$ and $\lambda$ are the parameters of the truncated Exponential PMF and truncated Poisson PMFs in \eqref{normalised_exponential} and \eqref{normalised_poisson}, respectively. The results are for $n = 6$ and $k = 3$.}
\label{fig_io_exp__poiss_adv}
\end{center}
\end{figure}


\textbf{Average I/O reads to retrieve $\mathbf{x}_{2}$ alone.} The average number of I/O reads to retrieve the 2nd version alone using the basic SEC is $\mathbb{E}[\eta(\mathbf{x}_{2})] = \mathbb{E}[\eta(\mathbf{x}_{1}, \mathbf{x}_{2})]$ since the delta has to applied over the first version. However, for the optimized method, the average I/O reads is $\mathbb{E}[\eta(\mathbf{x}_{2})] = \sum_{\gamma = 1}^{k} \mbox{P}_{\Gamma}(\gamma) t(\gamma)$ where $t(\gamma) = k$ when $\gamma \geq \frac{k}{2}$, and $t(\gamma) = k  + 2\gamma$, otherwise. Compared to non-differential coding, the average percentage increase in the I/O reads for fetching the 2nd version for both the basic and the optimized methods is computed as $\frac{\mathbb{E}[\eta(\mathbf{x}_{2})] - k}{k} \times 100$.
In Fig. \ref{fig_io_x2_exp_poiss_pmf} we present the results corresponding to the PMFs in Fig. \ref{fig_exp_poiss_pmf}. It shows that the optimized SEC reduces the excess number of I/O reads for the 2nd version.  Though the optimized SEC reduces the excess I/O, this additional I/O reads for $\mathbf{x}_{2}$ is due to differential encoding that reduces the I/O for accessing both $\mathbf{x}_{1}$ and $\mathbf{x}_{2}$. One possible direction to reduce the I/O for the latest version is to employ reverse SEC (as pointed in Section \ref{sec2_subsec1}).

\begin{figure}
\begin{center}
\includegraphics[width=3.3in]{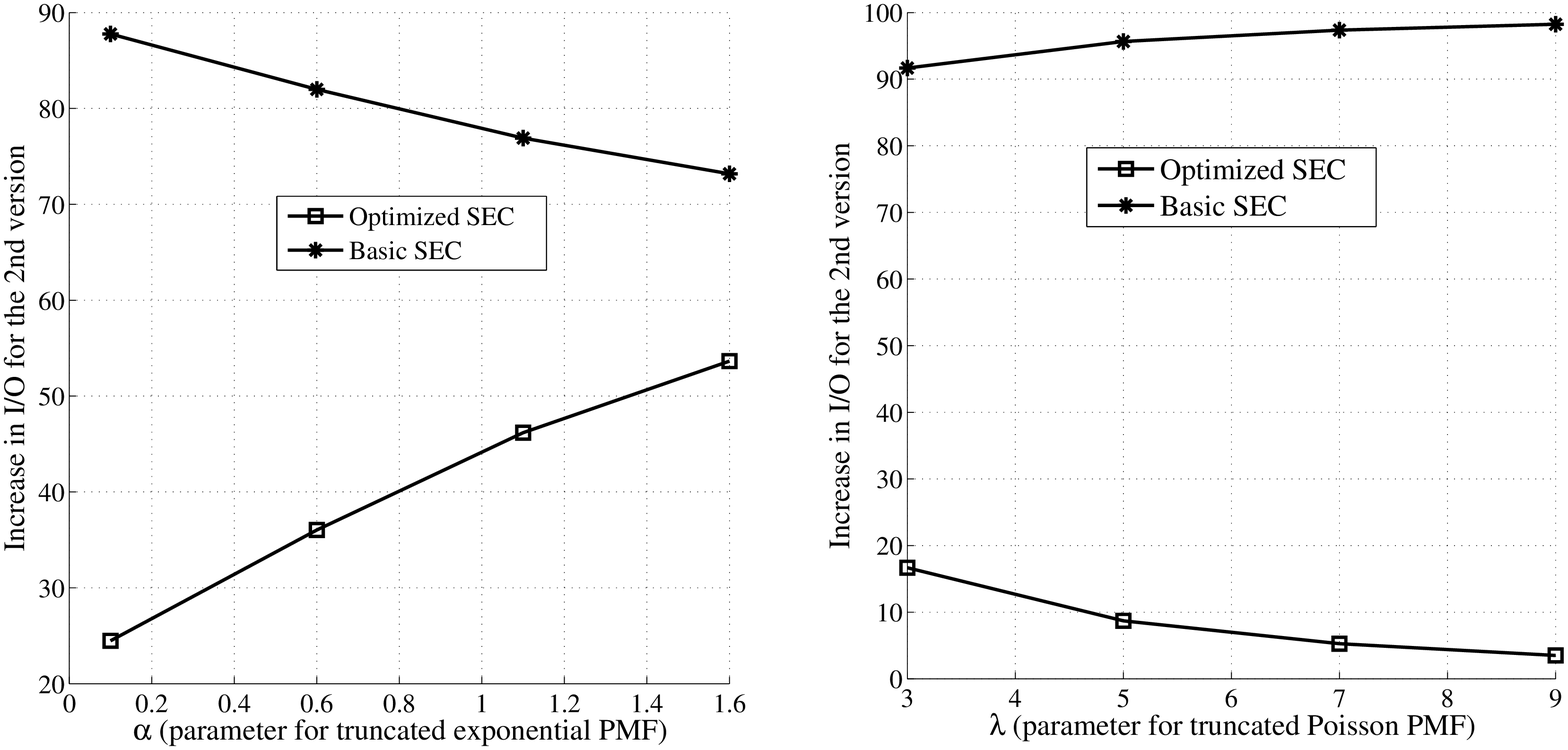}
\vspace{-0.5cm}
\caption{Average percentage increase in the I/O reads to access $\mathbf{x}_{2}$ for PMFs in Fig. \ref{fig_exp_poiss_pmf}. Results for both basic and optimized SEC methods are presented. Here $\alpha$ and $\lambda$ are the parameters of the truncated Exponential PMF and truncated Poisson PMFs in \eqref{normalised_exponential} and \eqref{normalised_poisson}, respectively.}
\label{fig_io_x2_exp_poiss_pmf}
\end{center}
\end{figure}


\subsection{An example system with $L > 2$ versions} To study the trade-offs of using SEC when multiple versions are involved, we revisit the example in Section \ref{sec2:subsec:example}. $L = 5$ versions with sparsity levels of subsequent versions $\{\gamma_{j} ~|~ 2 \leq j \leq L\} = \{3, 8, 3, 6\}$ of an object of size $k = 10$ is stored using a $(20, 10)$ SEC. We plot the I/O numbers for the basic and the optimized SEC in Fig. \ref{fig_L_5_analysis}. The numbers are presented to retrieve both the individual versions ($l$-th version for $1 \leq l \leq 5$) as well as all the first $l$ versions. The plot shows 20\% saving in total I/O reads with respect to the non-differential scheme, for only slightly higher I/O for the optimized DEC. 

\begin{figure}
\begin{center}
\includegraphics[width=3in]{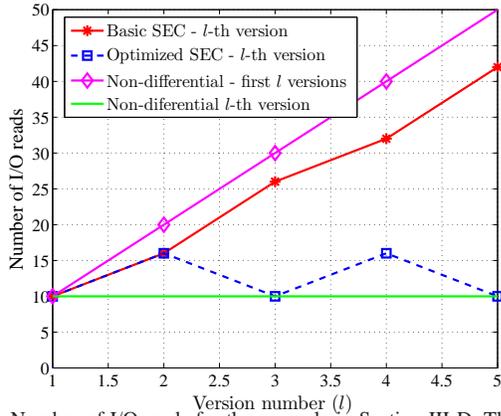}
\vspace{-0.5cm}
\caption{Number of I/O reads for the example in Section \ref{sec2:subsec:example}. The I/O reads to retrieve the $l$-th version and the first $l$-versions, $1 \leq l \leq 5$ are presented for different methods.}
\label{fig_L_5_analysis}
\end{center}
\end{figure}

\section{Conclusions}

In this paper we propose a framework - Sparsity Exploiting Coding (SEC) - for archiving versioned data using storage efficient erasure coding, where the individual versions are not coded in isolation, but instead the differences across subsequent versions are coded. The sparsity in the delta information is exploited for better I/O performance when the archive of versioned data is read back. We identify Cauchy matrix based MDS codes as one candidate which satisfies the requirements laid out in our framework to be able to opportunistically exploit the sparsity, and discuss two variants, a systematic and a non-systematic one. We demonstrate that, in doing so, for a given choice of storage overhead, and for a practical redundancy placement strategy (where encoded pieces pertaining to all the versions are colocated) there is no compromise in the system's resilience. Analysis and numerical/simulation experiments confirm the effectiveness of SEC. 

Our study shows that the non-systematic SEC provides better resilience for deltas corresponding to individual intermediate versions, even though the whole archive's resilience is constricted by the resilience of the first (or last) version of the data, which is coded as it is. This suggests room for reducing the storage overhead of the intermediate versions in the non-systematic SEC, possibly using puncturing techniques, which we will pursue in immediate future.

\bibliographystyle{abbrv}
\bibliography{IPDPS1515}

\end{document}